\DeclareMathOperator*{\argmax}{arg\,max}
\renewcommand{\>}{\rangle}
\renewcommand{\emptyset}{\varnothing}
\theoremstyle{definition}
\newtheorem{definition}{Definition}
\newtheorem{observation}{Observation}
\setlist[itemize]{leftmargin=*}
\title{On the Coexistence of  Stability and Incentive Compatibility in Fractional Matchings}
\author{Shivika Narang, Yadati Narahari}
\date{}
\begin{document}

\maketitle

\begin{abstract}
\noindent Stable matchings have been studied extensively in social choice literature. The focus has been mostly on integral matchings, in which the nodes on the two sides  are wholly matched. A fractional matching, which is a convex combination of integral matchings, is a natural extension of integral matchings.  The topic of stability of fractional matchings has started receiving attention only very recently. Further, incentive compatibility in the context of fractional matchings has received very little attention. With this as the backdrop, our paper studies the important topic of incentive compatibility of mechanisms to find stable fractional matchings. We work with   preferences expressed in the form of cardinal utilities. Our first result is  an impossibility result that there are matching instances for which no mechanism that produces a stable fractional matching can be incentive compatible or even approximately incentive compatible. This provides the motivation to seek special classes of matching instances for which there exist incentive compatible mechanisms that produce stable fractional matchings. Our study leads to a class of matching instances that admit unique stable fractional matchings. We first show that a unique stable fractional matching for a matching instance exists if and only if the given matching instance satisfies the conditional mutual first preference (CMFP) property. To this end, we provide a polynomial-time algorithm that makes ingenious use of envy-graphs to find a non-integral stable matching whenever the preferences are strict and the given instance is not a CMFP matching instance. For this class of CMFP matching instances, we prove that every mechanism that produces the unique stable fractional matching is (a) incentive compatible and further (b) resistant to coalitional manipulations.  
\end{abstract}

\section{Introduction}
Matchings have been studied for several decades now, beginning with the pioneering work of Gale and Shapley \cite{gale1962college}. Gale and Shapley introduced the notion of stability and provided algorithms for finding stable matchings in bipartite graphs. 
Since then, an extensive amount of work has been carried out on both the theory and applications of stable matchings. 
Matching mechanisms have been investigated for their stability as well as incentive compatibility aspects. The focus of these studies has often been school choice mechanisms or residency matching mechanisms already in practice\cite{roth1982economics, irving2000hospitals, roth2003origins, abdulkadirouglu2003school, roth2005kidney,manlove2008hospitals,abdulkadirouglu2009strategy,Baswana2019india, gonczarowski2019matching}. 

In these familiar settings, the nodes on the two sides  are wholly or ``integrally'' matched (these are called integral matchings). A fractional matching is a convex combination of integral matchings and thus generalizes integral matchings in a natural way.  
Fractional matchings have largely been studied in the literature only as a means to produce integral matchings. Even the majority of papers that explicitly study fractional matchings only study them to gain a deeper understanding of integral matchings \cite{roth1993stable,teo1998geometry,sethuraman2006many}.  Only recently, Caragiannis et al \cite{caragiannis2019stable} presented an exclusive  study on the space of stable fractional matchings. Incentive compatibility is another key requirement in the context of fractional matchings but has not been explored.  Our paper reports the first investigation into the important topic of incentive compatibility of matching mechanisms to find stable fractional matchings. We focus on  fractional matchings with cardinal utilities.


\subsection{Fractional Matchings}
There are many practical situations where fractional matchings are relevant.
Consider for instance, labour markets. Here  freelancing experts or professionals can spend different fractions of their time working for multiple organizations. Labour markets have been mentioned in \cite{caragiannis2019stable}. One can think of many other applications, for example, (a) matching markets for cloud space where it is not necessary for data to be stored entirely on one server; (b) procurement markets for matching buyers with sellers; etc. 

Fractional allocations are also relevant for settings where matchings must be done repeatedly for a large number of times. An apt example comes from the problem of pairing students for group projects. Consider a school where students have a large number of assignments and projects to be done in pairs over the course of the school year, which have been decided in advance. Now, in most schools, there may be two different teachers for the same subject. To ensure uniformity in learning and utility, the school may mandate that students must work with students who have been taught by a different teacher. A fractional matching in this setting would indicate the fraction of times two students work together over the course of a single year. 
The students would have their own set of preferences over the students in other classes. Clearly, the same pair of students need not work together for all their assignments. Stability is important in this context as well, as it would not  be desirable to have a matching where two students prefer to work with each other on all their projects than to work as in the matching. If the matching is being decided by the school, then incentive compatibility becomes crucial as well, as the school would want their students to report their preferences truthfully.   

Exploring fractional matchings to investigate various properties studied in the context of integral matchings could raise interesting challenges.
For example, relaxing the integrality constraint in matchings may make the problem harder. Consider, for instance, the problem of finding the matching that maximizes social welfare amongst all stable matchings. For integral matchings, this can be posed as a linear program. However, when we allow for matchings to be fractional, the problem becomes \textbf{NP}-Hard, as shown by \cite{caragiannis2019stable}. Caragiannis et al. also show that by allowing the stable matchings to be fractional, we can make large gains in terms of social welfare. It is therefore of interest to study fractional matchings and devise efficient algorithms to find fractional matchings with desirable properties.
 
Incentive compatibility is clearly a desirable property to seek. We would like the matching mechanism that produces the matching to induce all participating agents to report their true preferences. Considerable amount of work has gone into studying the incentive compatibility properties of various algorithms, particularly of the Gale- Shapley algorithm, by \cite{roth1982economics,teo2001gale,vaish2017manipulating,shen2018coalition,yahiro2018strategyproof,zhang2018strategyproof}, among others. In this paper, we investigate  the important but unexplored problem of finding incentive compatible mechanisms to produce stable fractional matchings. In doing so, we characterize the space of matching instances with a unique stable fractional matchings. This is of independent interest in the context of structural properties of stable fractional matchings. 

\subsection{Our Contributions}
As already stated, we focus our study on matching settings where the agents provide their preferences (regarding agents on the other side) in the form of cardinal utilities. This may result in weak preferences or strict preferences. While most of the results in this paper hold for the case of weak preferences, one particular result (namely concerning Algorithm 2) holds only for strict preferences. 
We will be defining these terms more formally in Section \ref{sec:prel}. Section \ref{sec:struct} provides relevant structural observations on the space of stable fractional matchings. 
Following are the contributions of this paper on the topic of incentive compatibility of matching mechanisms to find stable fractional matchings.
\begin{itemize}
\item First, in Section \ref{sec:ic}, we make a significant observation: that there are matching instances for which {\em no mechanism} that produces a stable fractional matching is incentive compatible. Further, for those instances where for any agent, the utility from the mechanism is greater than half the utility from misreporting, no mechanism can always produce a stable matching. This motivates us to seek restricted classes of matching instances where an incentive compatible mechanism exists, that produces a stable fractional matching.
\item We characterize, in Section \ref{sec:cmfp}, restricted classes of matching instances that admit a {\em unique} stable fractional matching. Specifically, we show that a unique stable fractional matching will exist if and only if the given matching instance satisfies the conditional mutual first preference property (CMFP).
\item For the above class of matching instances, we show that {\em every} mechanism that produces the unique stable fractional matching will be incentive compatible. Furthermore, the unique fractional matching will be resistant to \emph{coalitional manipulations}.
\item We next provide, in Section \ref{sec:unique}, an efficient algorithm (Algorithm \ref{alg:frac}) that constructs, under strict preferences, stable fractional matchings that are not integral. Our algorithm makes intelligent use of envy-graphs, hitherto unused in the stable matchings literature.
\end{itemize}
The above algorithm may not work if the preferences are weak. All other results above hold even if the preferences are weak. 
While the focus of this paper is on incentive compatible mechanisms, the characterization of the space of stable matching instances with a unique stable fractional matching is of independent interest. The proof of this characterization also leads to an interesting corollary about the number of stable fractional matchings possible under strict preferences. Before presenting our results, we first discuss the literature relevant  to this paper.
\subsection{Relevant Work}
\label{sec:lit}
\noindent Work in the study of matchings began with the seminal work by  Gale and Shapley\cite{gale1962college}, and spurred decades of work on various aspects of stable matchings in particular. Roth \cite{roth1982economics} established that no stable (integral) matching mechanism is incentive compatible for all agents. It was shown however that the Gale-Shapley algorithm is optimal and incentive compatible for the proposing side. A part of the matchings literature is devoted to the manipulation of the mechanism \cite{teo2001gale, vaish2017manipulating,shen2018coalition}.

A sizeable amount of work investigated the structure of the space of stable matchings. Work by  Roth \cite{roth1993stable} and Teo and Sethuraman \cite{teo1998geometry} used linear programming formulations to capture and analyze the stable marriage problem. Both these works established the integrality of the stable matchings polytope. 
The fractional matchings studied in both the papers are said to be stable if the closest integral matching is stable. In general, in the majority of literature, a fractional matching has been considered as  stable if the matchings in its support are stable. Most matching algorithms that find fractional matchings use them for rounding to an integral matching. 

A key roadblock in the initial analysis of fractional matchings was that the preferences considered were ordinal and not cardinal. Consequently, there was no non-trivial way to define the stability of a fractional matching by itself. Note that a fractional matching is a convex combination of integral matchings. Caragiannis et al \cite{caragiannis2019stable} overcame this roadblock by considering a stable matching setting with cardinal utilities. Following this, several others  \cite{freeman2021two, gollapudi2020almost, narang2020achieving, karni2021fairness} have used cardinal utilities to study fairness in the space of integral matchings, sometimes with stability, other times without. 

The focus of Caragiannis et al\cite{caragiannis2019stable} is to consider the social welfare of stable fractional matchings. In many instances, fractional matchings are able to achieve  higher social welfare when compared to integral matchings. Consequently, the aim of \cite{caragiannis2019stable} was to find a stable fractional matching which maximizes social welfare. This problem was called $\mathrm{SMC}$.
The authors make a series of structural observations about the space of stable fractional matchings (such as non-convexity of this space \footnote{This further shows that fractional allocations in the stable matchings polytope studied in \cite{roth1993stable} and \cite{teo1998geometry} need not in fact be stable themselves. In the follow-up work in \cite{sethuraman2006many}, the authors call the fractional allocations as fractional stable matchings, indicating that they are not discussing the stability of these fractional matchings but are only interested in their being a convex combination of stable matchings.}), provide an efficient approximation algorithm and two exponential time exact algorithms for the $\mathrm{SMC}$ problem along with its hardness of approximation.

It is relevant to note that Caragiannis et al \cite{caragiannis2019stable}, do not give an algorithm to find a stable fractional matching  that is non-integral, whenever one exists, irrespective of the social welfare.  One of the contributions of our work is to fill this gap by designing  a polynomial-time algorithm to find a stable fractional matching, which is non-integral, whenever one exists, under strict preferences. 
This condition of strict preferences, while being realistic and  only mildly restrictive for practical  purposes, is critical to our construction. 
This in turn also enables  us to characterize the instances for which a unique stable fractional matching exists. 

In the simpler setting of binary utilities, the maximum weight matching satisfies many desirable properties such as being stable and giving no agent an incentive to misreport its preferences. This setting is studied by Bogomolnia and Moulin \cite{bogomolnaia2004random}. However these results do not extend beyond binary utilities, not even to ternary utilities.

Envy-graphs are key to our analysis and have previously been used mostly only in work on fair division. Envy is typically defined very differently in matching settings. ``Justified'' envy has been studied for the many-to-one matching setting, such as that of school choice, and is a relaxation of stability in these settings  \cite{wu2018lattice,aziz2019random, yahiro2018strategyproof,zhang2018strategyproof}. This is not the same as the notion of envy-freeness commonly studied in fair division literature  \cite{lipton-envy-graph,budish2011combinatorial,foley1967resource,varian1974equity,stromquist1980cut}. Recently, Gollapudi, Kollias and Plaut \cite{gollapudi2020almost} and Freeman, Micha and Shah \cite{freeman2021two} study envy as in the fair division context for many-to-many matchings. The polynomial-time algorithm we present in this paper to compute the unique stable fractional matching uses envy-graphs.

\section{Preliminaries and Overview of Main Results}
We now discuss relevant preliminaries needed for our analysis. We shall refer to an instance of our problem of finding stable fractional matchings as a {\em stable matching instance\/}. 
\subsection{Definitions}
\label{sec:prel}
We represent a fractional (bipartite) matching instance as $I= \langle M, W, U, V \rangle$. Here, $M=\{m_1, \cdots, \, m_n\}$ is the set of men and $W=\{w_1,\cdots,$ $w_n\}$, is the set of women. The utilities of men and women  are captured by matrices $U=[u_{i,j}]_{i,j\in [n]}$ and $V=[v_{i,j}]_{i,j\in [n]}$ respectively. In particular, $u_{i,j}$ is $m_i$'s utility for being matched integrally to $w_j$. Analogously, $v_{i,j}$ is $w_j$'s utility for being matched integrally to $m_i$. We assume that all entries of $U$ and $V$ are non-negative and that a linear order can be derived from the utility of one agent. That is, for each man $m\in M$, there do not exist two distinct women $w,w'$ such that $U(m,w)=U(m,w')$. Similarly, for each woman $w\in W$, there do not exist two distinct men such that $V(m,w)=V(m',w)$. 

Any set of cardinal utilities will imply either a weak\footnote{An ordering is weak if for any two agents, the utilities may be equal or unequal. With linear orderings, the utilities for any two agents must not be equal.} or a linear ordering over the set of agents. This ordering will be called the preference (relation) of this agent. When the utility function implies a linear ordering, such preferences are called strict. We shall denote the $i^{\text{th}}$ row of $U$ by $U_i$ and the $j^{th}$ column of $V$ by $V_j$. Observe that $U_i$ captures the utilities of $m_i$ and similarly $V_j$ captures the utilities of $w_j$. 

Matching problems are traditionally studied as graph problems. Let us denote the induced bipartite graph for a stable matching instance $I$ as $G=(M,W,E)$ where $(m_i,w_j)\notin E \Leftrightarrow U(i,j)=V(i,j)=0$. Given $a \in M\cup W, e\in E$, let $e\perp a$ denote that $e$ is incident on $a$. Fractional matchings may be viewed in two equivalent ways.

\begin{definition}[Fractional Matching]\label{def:frac}
$\mu$ is said to be a fractional matching on $G=(M,W,E)$ if $\mu: E\rightarrow [0,1]$ such that $\forall a\in M \cup W, \sum_{e\perp a} \mu (e)\leq 1 $.
\end{definition}

We use $\Pi(I)$ to denote the space of all fractional matchings on an instance $I$. Fractional matchings can alternately be defined as convex combinations of integral matchings, where integral matchings are defined as follows:

\begin{definition}[Integral Matching]
Given a graph $G=(M,W,E)$, $\mu \subseteq E$ is said to be an integral matching if for each $a\in M \cup W$, there is at most one edge in $\mu$ which is incident on $a$.
\end{definition}  

By the Birkhoff-von Neumann theorem \cite{birkhoff1946tres}, given a fractional matching as defined in \ref{def:frac}, we can decompose it into a convex combination of $O(n^2)$ integral matchings. As a result, the only difference between the two viewpoints is that one considers a fractional matching independently, while the other looks at the integral matchings in its support as well. The support of a fractional matching is the set of all integral matchings whose convex combination forms the fractional matching (that is, with non-zero weight on the relevant edges). We will consider fractional matchings as defined in Definition \ref{def:frac}. 


We now present some notation with regard to matchings. For an integral matching $\mu$ and $a\in M \cup W$, $\mu(a)$ denotes $a$'s partner under $\mu$. We shall say that fractional matching $\mu_1$ is a subset of $\mu_2$, when (i) they are both defined for the same instance $I$ and (ii) for each $(m,w)\in M\times W$ such that if $\mu_1((m,w))>0$, then $\mu_1((m,w))=\mu_2((m,w))$. Note that it is necessary for the underlying instance to be the same for this definition to make sense. 

To define the notion of stability of fractional matchings, we first define the notion of  a blocking pair in the context of fractional matchings. 
We say that $(m,w)$ forms a {\em blocking pair} under matching $\mu$ if both get strictly less utility from $\mu$ than they get by being matched integrally with each other. The utility of a woman $w$ under a fractional matching $\mu$ is $v_w(\mu)=\sum_{m\in M}\mu(m,w)V(m,w)$. Thus, it is essentially the weighted sum of the utility from each of the integral matchings in the support of $\mu$. The utility of a man  can be analogously defined as $u_m(\mu)=\sum_{w\in W}\mu(m,w)U(m,w)$. 

\begin{definition}[Stable Fractional Matchings]
A fractional matching $\mu$ is said to be stable is there does not exist a blocking pair of agents $(m,w)\in M\times W$ such that $U(m,w)>u_m(\mu)$ and $V(m,w)>v_w(\mu)$.
\end{definition}

Recall that our model is identical to that of Gale and Shapley with the difference that the preferences are available in cardinal form. Clearly, we can always derive an instance of the type studied by Gale and Shapley given $I=\langle M, W, U, V \rangle$. Consequently, a stable integral matching always exists. Further, no stable matching will leave any agent unmatched, as the number of men is equal to the number of women. Also, it is easy to see that integral matchings that are stable under our definition are also stable under the original definition of Gale and Shapley. We will, throughout our analysis, use these facts, without explicitly stating them.

Our objective is to investigate the existence of incentive compatible mechanisms to find stable fractional matchings. A  mechanism is incentive compatible if \textbf{truthful revelation} of utilities by all agents is a Nash Equilibrium for \emph{all input instances}. That is, for any agent, when all other agents are honest in reporting their utilities, being honest in the reporting of their own utilities will maximize their utility.

\begin{definition}[Incentive Compatibility]
A matching mechanism that takes as input $U_1,\cdots, U_n,\, V_1,\cdots, V_n$ and returns a matching $\mu(U_1,\cdots, U_n,\, V_1,\cdots, V_n)$ is said to be incentive compatible if for each $w_j \in W$ and $m_i \in M$, 

\begin{align*}
    v_j(\mu(V_j, ,V_{-j},\, U))&\geq v_j(\mu(x,V_{-j},\,U))\\
    \text{and~ } u_i(\mu(U_i,U_{-i},\, V))\,\,\,&\geq u_i(\mu(y,U_{-i},\, V)) 
\end{align*}
for any $n$-dimensional vectors of non-negative real utilities $x$ and $y$. 
\end{definition}

We show that there does not exist a mechanism which finds a stable fractional matching and is incentive compatible for all agents. 

We describe a special class of stable matching instances (which we call Conditional Mutual First Preference (CMFP)) having a {\em unique} stable fractional matching. 
We show, in fact, that any stable matching instance which has a unique stable fractional matching belongs to CMFP. 
Further, when the input instances belong to this class, we have that any mechanism which finds a stable fractional matching is incentive compatible. Moreover,  no coalition of agents can strategically collude and misreport their utilities to increase their own utilities, when all remaining agents are truthful. This property can be described as resistance to coalitional manipulations.

In defining the CMFP class, we rely crucially on identifying pairs of nodes that satisfy a property called the mutual first preference (MFP) property.

\begin{definition}[MFP]
We say a pair $(m,w)$ satisfies mutual first preference property if $ \{w\} = \argmax_{a\in W} U(m,a),\,  \{m\}=\argmax_{a\in M} V(a,w) $.
\end{definition}
That is, each happens to be the \emph{first preference} of the other. Note that for any stable matching instance $I$ with strict preferences, if there exists a pair of nodes that satisfies MFP, the two nodes  must be matched under every stable matching.

An important contribution of our work is to give a polynomial-time algorithm to find a stable fractional matching, which is non-integral, whenever such a fractional matching exists. This algorithm is key to establishing that whenever there are no MFP pairs, under strict preferences, a stable fractional matching that is not integral can be found. This, in turn, characterizes the instances which have unique stable fractional matchings. Envy-graphs are critical to establishing these results. Before defining the construction of an envy-graph, we define envy.

\begin{definition}[Envy]
Given a stable matching instance $I=\<M,W,U,V\>$ and a fractional matching $\mu$, for any $w,w'\in W$, $w$ is said to envy $w'$ under $\mu$ if \[v_w(\mu)=\sum_{m\in M}\mu(m,w)V(m,w)<\sum_{m\in M}\mu(m,w')V(m,w).\] Similarly, 
$m$ is said to envy $m'$ under $\mu$ if \[u_m(\mu)=\sum_{w\in W}\mu(m,w)U(m,w)<\sum_{w\in W}\mu(m',w)U(m,w).\]
\end{definition}
For an integral matching $\mu$, $w$ envies $w'$ under $\mu$ if $v_w(\mu)=V(\mu(w),w)<$ $V(\mu(w'),w)$. 
Given a matching, the envy-graph under that matching as defined as follows.
\begin{definition}[Envy-graph]
Given a stable matching instance $I=\<M,W,U,V\>$ and a fractional matching $\mu$, the envy-graph on women under $\mu$ is a directed graph $G_W(\mu)=(W,E)$ where $(w,w')\in E\Leftrightarrow w$ envies $w'$ under $\mu$. Analogously, the envy-graph on men under $\mu$ is a directed graph $G_M(\mu)=(M,E)$ where $(m,m')\in E\Leftrightarrow m$ envies $m'$ under $\mu$.
\end{definition}

\noindent When there are no MFP pairs, we can use envy-graphs to find stable fractional matchings which are non-integral.

\subsection{Some Structural Observations}
\label{sec:struct}
\noindent Before we present our analysis, it is important to demonstrate that it is non-trivial to compute stable fractional matchings which are not integral. We now make some structural observations regarding the space of stable fractional matchings.  
\begin{figure}[h!]
    \centering
    \begin{subfigure}[b]{0.32\linewidth}
        \includegraphics[width=\linewidth]{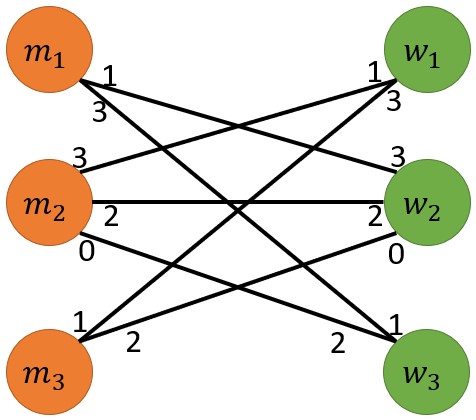}
        \caption{ {Non-convexity }}
        \label{subfig:nonconv}
    \end{subfigure}
    \begin{subfigure}[b]{0.32\linewidth}
        \includegraphics[width=\linewidth]{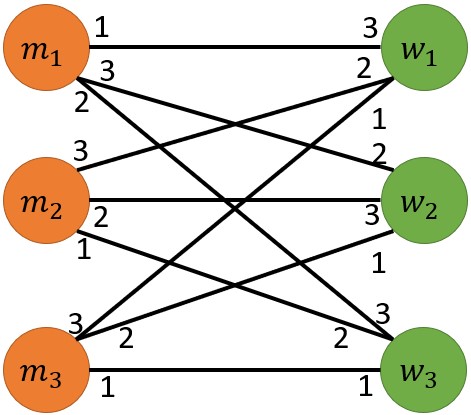}
        \caption{ {Multiple Stable Matchings}}
        \label{subfig:intneq}
    \end{subfigure}
    \begin{subfigure}[b]{0.32\linewidth}
        \includegraphics[width=\linewidth]{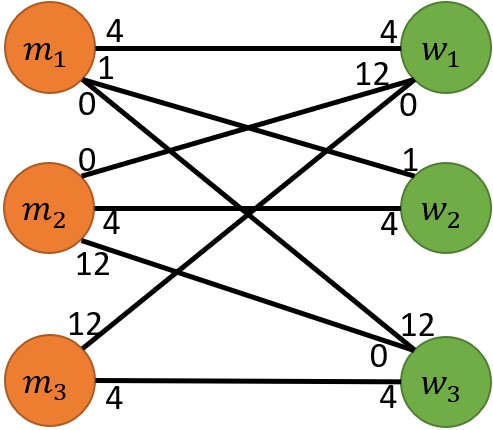}
        \caption{ {Unstable Support}}
        \label{subfig:unstab}
    \end{subfigure}
    \caption{Examples. {\small Numeric values near the nodes are the corresponding utilities}}
    \label{fig:structure}
    \end{figure}
   
\begin{observation}
A convex combination of stable fractional matchings  need not always be stable.
\end{observation}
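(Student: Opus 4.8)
The plan is to establish the claim by exhibiting an explicit counterexample, since ``need not always be stable'' is a negative statement. Because every stable integral matching is in particular a stable fractional matching, it suffices to produce a single instance $I$ together with two stable integral matchings $\mu_1,\mu_2$ whose midpoint $\tfrac{1}{2}\mu_1+\tfrac{1}{2}\mu_2$ admits a blocking pair. To decide how large the instance must be, I would first observe that no $2\times 2$ instance can work: there the two perfect matchings jointly cover all four edges, so any pair $(m,w)$ blocking a convex combination would satisfy $U(m,w)>U(m,w')$ and $V(m,w)>V(m',w)$ for the respective alternatives, making $(m,w)$ an MFP pair, which would already block whichever stable matching omits it --- a contradiction. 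This forces $n=3$.

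Next I would construct a concrete $3\times 3$ instance with two disjoint perfect matchings $\mu_1,\mu_2$ and a designated target pair $(m_1,w_1)$ lying in neither. The strict preferences would be arranged so that (i) in $\mu_1$, $m_1$ is matched to a woman he ranks above $w_1$ while $w_1$ is matched to a man she ranks below $m_1$, and (ii) in $\mu_2$, $w_1$ is matched to a man she ranks above $m_1$ while $m_1$ is matched to a woman he ranks below $w_1$. This configuration is exactly what keeps $(m_1,w_1)$ from blocking either $\mu_1$ or $\mu_2$ in isolation, yet exposes it in the average.

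I would then assign cardinal values consistent with these orders so that the averaging drags both endpoints below what the mutual pairing offers. Concretely, choosing values for $m_1$ of the form $U(m_1,\mu_1(m_1))=2.5,\ U(m_1,w_1)=2,\ U(m_1,\mu_2(m_1))=1$ gives $u_{m_1}(\mu)=1.75<2=U(m_1,w_1)$, and symmetrically choosing $V(\mu_2(w_1),w_1)=2.5,\ V(m_1,w_1)=2,\ V(\mu_1(w_1),w_1)=1$ gives $v_{w_1}(\mu)=1.75<2=V(m_1,w_1)$. Hence under the midpoint matching both $m_1$ and $w_1$ receive strictly less than they would by being matched integrally to each other, so $(m_1,w_1)$ is a blocking pair and $\mu$ is not stable.

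The remaining, essential verification is to confirm that $\mu_1$ and $\mu_2$ are genuinely stable, by checking the at-most-six non-matched pairs for each and showing that in every case the man prefers his current partner or the woman prefers hers. I expect the main obstacle to be the \emph{simultaneous constraint satisfaction}: the preference orders of the auxiliary agents $m_2,m_3,w_2,w_3$ must be fixed so that all non-blocking conditions for both matchings hold at once, while the cardinal gaps around $(m_1,w_1)$ stay wide enough to force the strict inequalities in the averaged utilities. Once one consistent profile is pinned down, plugging in the numbers completes the argument.
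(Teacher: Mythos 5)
Your proposal is correct and takes essentially the same route as the paper: exhibit an explicit $3\times 3$ instance with two stable integral matchings whose convex combination admits a blocking pair (the paper's Figure~\ref{subfig:nonconv} plays exactly this role). The construction you outline does exist, so the ``remaining verification'' you defer is routine: take $\mu_1=\{(m_1,w_2),(m_2,w_1),(m_3,w_3)\}$ and $\mu_2=\{(m_1,w_3),(m_2,w_2),(m_3,w_1)\}$ with orders $m_1\!: w_2\succ w_1\succ w_3$ (values $2.5,2,1$), $w_1\!: m_3\succ m_1\succ m_2$ (values $2.5,2,1$), $m_2\!: w_1\succ w_2\succ w_3$, $m_3\!: w_3\succ w_1\succ w_2$, $w_2\!: m_2\succ m_1\succ m_3$, $w_3\!: m_1\succ m_3\succ m_2$; then $\mu_1$ gives every man his first choice and $\mu_2$ gives every woman hers, so both are trivially stable, while $(m_1,w_1)$ blocks the midpoint since both agents average to $1.75<2$. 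One point in your sketch deserves emphasis: your design constraint that \emph{both} members of the target blocking pair change partners between $\mu_1$ and $\mu_2$ is not optional. If, say, $m$ keeps the same partner, then $u_m(\mu_\alpha)$ is constant, so blocking requires $U(m,w)>U(m,\mu_1(m))$; stability of $\mu_1$ and $\mu_2$ then forces $V(m,w)$ below $w$'s utility at \emph{both} endpoints, hence below it at every convex combination, and no block can form. (The paper's text names $(m_3,w_2)$ as the blocking pair for two matchings that both contain $(m_3,w_3)$, which this argument rules out as written; your variant is the one that actually verifies.) Your $2\times 2$ impossibility argument is also sound.
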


\noindent \cite{caragiannis2019stable} illustrate this with an example which does not have strict preferences. 
Figure \ref{subfig:nonconv} demonstrates that this holds even with strict preferences. There are exactly two stable integral matchings. These are:  $\mu_1=\{(m_1,w_1)$, $(m_2,w_2)$, $(m_3,w_3)\}$ and $\mu_2 =\{(m_1,w_2)$, $(m_2,w_1)$, $(m_3,w_3)\}$. Fractional matching $\mu_{\alpha}=\alpha\mu_1 +(1-\alpha)\mu_2$ is not stable for all $\alpha \in [1/3,2/3]$, as $(m_3,w_2)$ form a blocking pair. 
Thus, we have that the space of stable fractional matchings is not easy to iterate over. 

Note that  for any instance, stable integral matchings form a subset of stable fractional matchings. However, knowing the space of stable integral matchings does not give much information about the space of stable fractional matchings. We shall now demonstrate this through two observations. 

\begin{observation}
A matching instance may have a unique stable integral matching but multiple stable fractional matchings.
\end{observation}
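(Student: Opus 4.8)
The statement is an existence claim, so the plan is to \emph{exhibit a single witnessing instance}: a stable matching instance $I=\langle M,W,U,V\rangle$ with strict cardinal preferences for which exactly one integral matching is stable, yet at least two fractional matchings are stable. Since any stable integral matching is in particular a stable fractional matching, it suffices to produce, alongside the unique stable integral matching $\mu^*$, one \emph{non-integral} stable fractional matching $\mu'$; then $\mu^*$ and $\mu'$ are two distinct stable fractional matchings. A quick check on $2\times 2$ instances shows none can work---there a no-MFP profile is forced into the ``battle of the sexes'' with two stable integral matchings, while any MFP pair rigidly fixes both sides and leaves no fractional slack---so I would build the example on $n=3$.

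For the construction I would fix $3\times 3$ valuation matrices $U,V$ (non-negative, with strict rows and columns) and carry out two verifications. First, \emph{uniqueness of the stable integral matching}: with $n=3$ there are only $3!=6$ perfect matchings, so I can check each for a blocking pair directly, or equivalently run Gale--Shapley from both sides and confirm that the man-optimal and woman-optimal stable matchings coincide, forcing $\mu^*$ to be the unique stable integral matching. Second, \emph{stability of a non-integral $\mu'$}: I would write down an explicit doubly-substochastic $\mu'$ supported on a short cycle of edges (each relevant agent splitting its mass across two partners) and verify, pair by pair, that no $(m,w)$ blocks. The key leverage here is that for a fractional $\mu'$ the relevant quantities are the \emph{averaged} utilities $u_m(\mu')=\sum_w \mu'((m,w))U(m,w)$ and $v_w(\mu')=\sum_m \mu'((m,w))V(m,w)$; by tuning the cardinal gaps I can arrange that for every candidate pair at least one of the two inequalities $U(m,w)>u_m(\mu')$, $V(m,w)>v_w(\mu')$ fails, so no blocking pair arises. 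This is exactly the instance depicted in Figure~\ref{subfig:intneq}.

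I expect the construction itself---finding valuations that pass both verifications at once---to be the main obstacle, because the two requirements pull in opposite directions: uniqueness of the integral matching needs the preferences aligned enough to destroy every alternative permutation, whereas a stable non-integral $\mu'$ needs enough misalignment to leave room for agents to spread their mass while keeping the averaged utilities high. The delicate point is that the witness $\mu'$ \emph{cannot} be a convex combination of stable integral matchings: since $\mu^*$ is the only stable integral matching, the only such convex combination is $\mu^*$ itself, and $\mu'\neq\mu^*$. By the Birkhoff--von Neumann theorem $\mu'$ is nonetheless a convex combination of perfect integral matchings, so at least one matching in \emph{every} such decomposition must be unstable. Thus any valid witness automatically has \emph{unstable support}, which is both the crux of the difficulty and the reason the figure is labelled accordingly; producing cardinal values that realise this while pinning the integral optimum to a single matching is the step that needs the most care.
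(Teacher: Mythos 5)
Your proposal is correct and follows essentially the same route as the paper: a $3\times 3$ witness instance (the one in Figure~\ref{subfig:intneq}) whose unique stable integral matching $\mu_1$ is mixed with a second, necessarily unstable, integral matching differing on a two-cycle ($m_1,m_2$ swapping $w_1,w_2$ with $(m_3,w_3)$ fixed), yielding stable fractional matchings $\mu_\alpha=\alpha\mu_1+(1-\alpha)\mu_2$ for $\alpha\in[1/2,1]$. Your observation that any such witness must have unstable support is exactly the point the paper makes in its subsequent observation.
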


\noindent Consider the stable matching instance represented in Figure \ref{subfig:intneq}. The unique stable integral matching is $\mu_1 =\{(m_1,w_2)$, $(m_2,w_1)$, $(m_3,w_3)\}$. Consider matching $\mu_2 = \{(m_1,w_1)$, $(m_2,w_2)$, $(m_3,w_3)\}$. Fractional matching $\mu_{\alpha}=\alpha\mu_1 +(1-\alpha)\mu_2$ is stable for all $\alpha \in [1/2,1]$.

This observation is consequential in conjunction with the observation that there may be stable fractional matchings with only unstable integral matchings in the support.

\begin{observation}
There exist stable fractional matchings whose support consists solely of unstable integral matchings.
\end{observation}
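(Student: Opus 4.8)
The statement is purely existential, so the plan is to exhibit a single matching instance $I$ together with one stable fractional matching $\mu$ such that every integral matching in its support is unstable. I would work at the smallest nontrivial size, $n=3$, and take $\mu=\frac{1}{2}\mu_1+\frac{1}{2}\mu_2$ where $\mu_1=\{(m_1,w_2),(m_2,w_3),(m_3,w_1)\}$ and $\mu_2=\{(m_1,w_3),(m_2,w_1),(m_3,w_2)\}$ are the two derangements (the two $3$-cycles) of $K_{3,3}$. The reason for this choice is structural: the matrix of $\mu$ is supported exactly on the six off-diagonal edges, and the only permutations avoiding the diagonal are these two $3$-cycles, so the Birkhoff--von Neumann decomposition of $\mu$ is forced to be $\frac{1}{2}\mu_1+\frac{1}{2}\mu_2$. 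Hence the support is unambiguously $\{\mu_1,\mu_2\}$, and it suffices to make both of these integral matchings unstable while keeping $\mu$ stable. Under $\mu$ each man $m_i$ is matched half to $w_{i+1}$ (his $\mu_1$-partner) and half to $w_{i-1}$ (his $\mu_2$-partner), and symmetrically for the women.

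Next I would make $\mu_1$ and $\mu_2$ individually unstable by forcing an explicit blocking pair for each, chosen among the diagonal edges $(m_i,w_i)$, which lie outside the support. For instance, I would pick cardinal values so that $(m_1,w_1)$ blocks $\mu_1$ and $(m_2,w_2)$ blocks $\mu_2$: this only requires that $m_1$ prefer $w_1$ to its $\mu_1$-partner $w_2$ while $w_1$ prefers $m_1$ to its $\mu_1$-partner $m_3$, and analogously for $(m_2,w_2)$ against $\mu_2$. The remaining freedom in the valuations is then spent on ensuring that $\mu$ itself is stable.

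Establishing stability of $\mu$ is the heart of the argument and where I expect the main obstacle. The subtlety is that a blocking pair for $\mu$ is tested against \emph{averaged} utilities $u_m(\mu)$ and $v_w(\mu)$, not against a single partner. I would first dispose of the six support edges for free by orienting all preferences consistently: let every man rank his $\mu_1$-partner above his $\mu_2$-partner and every woman rank her $\mu_2$-partner above her $\mu_1$-partner. Then for any support edge one side always lies below its own average---a $\mu_1$-edge fails on the woman's side and a $\mu_2$-edge fails on the man's side---so no support edge can block, independently of the numerical values. This reduces stability to the three diagonal pairs, and here lies the real tension: these are exactly the pairs we \emph{want} to block the integral matchings, yet they must \emph{not} block $\mu$. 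The escape is that the integral blocking condition compares the diagonal value to a single partner, whereas the fractional condition compares it to the mean of both partners; I would therefore tune each diagonal value to sit above one partner (creating the integral block) but below the average of the two fractional partners (killing the fractional block), distributing the dominated side (man or woman) across the three diagonal pairs so that all three are covered simultaneously. Coordinating this placement---so that every diagonal pair is dominated on some side by the averaging while still destabilizing a support matching---is the delicate step; once the values are fixed, the claim follows from a direct finite check that no pair blocks $\mu$ while each of $\mu_1,\mu_2$ admits its designated blocking pair.
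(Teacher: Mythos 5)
Your proposal is correct, and it is the same kind of argument the paper gives: an explicit $3\times 3$ counterexample. The paper's instance (Figure~\ref{subfig:unstab}) uses a support of \emph{three} unstable matchings with weights $\tfrac16,\tfrac13,\tfrac12$, one of which even shares the edge $(m_2,w_2)$ with the unique stable integral matching; your instance uses the two $3$-cycle derangements with weight $\tfrac12$ each, which buys you two things the paper's example does not make explicit: the Birkhoff--von Neumann decomposition is forced (so ``the support'' is unambiguous), and stability of $\mu$ on all six support edges follows from a single orientation argument rather than a numeric check. The only thing missing from your write-up is the final instantiation, and your constraints are indeed simultaneously satisfiable since each one confines a single entry to a nonempty open interval in its own row or column. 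For the record, one valid choice is $U(m_1,\cdot)=(3,2,1)$, $U(m_2,\cdot)=(1,2,10)$, $U(m_3,\cdot)=(3,2,1)$ and $V(\cdot,w_1)=(2,10,1)$, $V(\cdot,w_2)=(1,3,2)$, $V(\cdot,w_3)=(3,2,1)$ (listing values for $m_1,m_2,m_3$): then $u_{m}(\mu)=(1.5,\,5.5,\,2.5)$ and $v_{w}(\mu)=(5.5,\,1.5,\,2.5)$, no pair blocks $\mu$, while $(m_1,w_1)$ blocks $\mu_1$ and $(m_2,w_2)$ blocks $\mu_2$, exactly as you prescribed.
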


\noindent  This can be illustrated by the stable matching instance described in Figure \ref{subfig:unstab}. The unique stable integral matching is $\mu_1 =\{(m_1,w_1), (m_2,w_2),(m_3,w_3)\}$. The matchings $\mu_2 =\{(m_1,w_2),(m_2,w_3),(m_3,w_1)\}$, $\mu_3=\{ (m_1,w_3),\, (m_2, w_1),$ $(m_3, w_2)\}$ and $\mu_4=\{(m_1,w_3),(m_2,w_2),(m_3,w_1)\}$ are all unstable. However, the fractional matching $\mu=\frac{\mu_2}{6}+\frac{\mu_3}{3}+\frac{\mu_4}{2}$ is in fact stable. Under $\mu$, the utilities of the men are $u_1(\mu)=\sfrac{1}{6}$, $u_2(\mu)=4$ and $u_3(\mu)=8$, while the utilities of the women are $v_1(\mu)=4$, $v_2(\mu)=\sfrac{13}{6}$ and $v_3(\mu)=10$.

\begin{observation}
Gender-optimal stable fractional matchings need not exist.
\end{observation}

\noindent It is well established that the Gale-Shapley algorithm matches each agent in the proposing side to their optimal stable partner. That is, agents on the proposing side are matched to their most preferred out of all agents they are matched to under any stable matching. Consequently, all agents on the proposing side get their maximum possible utility under any stable matching. However, this does not extend to fractional matchings. Consider the matching instance described by Figure \ref{subfig:unstab}. Here, under the only integral stable matching $\mu_1 =\{(m_1,w_1), (m_2,w_2),(m_3,w_3)\}$, all agents get a utility of $4$. This is the maximum possible utility for $m_1$ and $w_2$. Under the stable matching $\mu$ (defined in the previous observation), $m_1$ gets a utility of $\sfrac{1}{6}$, $w_2$ gets a utility of $\sfrac{13}{6}$, both of which are lower than their utility under $\mu_1$, while $m_3$ gets a utility of $8$ and $w_3$ gets a utility of $10$. As a result, both $m_3$ and $w_3$ are better off under $\mu$, whereas $m_1$ and $w_2$ prefer $\mu_1$.

These observations demonstrate that the space of stable fractional matchings is somewhat unintuitive, even under strict preferences which makes finding a stable non-integral, fractional matching  non-obvious.  Given any integral matching, it is not clear how to see whether or not it is present in the support of a stable fractional matching. Further, if so, what should be the weight on this matching. Our analysis on the incentive compatibility of stable fractional matching procedures  provides a way to overcome these hurdles.

\subsection{Overview of Main Results}
\noindent We present an overview of all our main results before a detailed discussion and proofs in the next two sections. 

Our first result shows that incentive compatibility and stability cannot coexist.

\begin{restatable}{lemma}{exicimp}\label{lem:exicimp}
No mechanism which finds a stable fractional matching is incentive compatible, when the only condition on the utilities is that they are non-negative.
\end{restatable}

\noindent We then extend this result to show that even an approximate notion of incentive compatibility cannot coexist with stability.
\begin{restatable}{theorem}{appxicimp}\label{thm:appxicimp}
For $\epsilon\in [0,\sfrac{1}{2})$, there does not exist an $\epsilon$-IC mechanism that always returns a stable fractional matching when the only condition on the utilities is that they are non-negative.
\end{restatable}

\noindent Given the above two results, in Section \ref{sec:cmfp} we look out for a subclass of matching instances over which stability and incentive compatibility can be achieved together. To this end, we look for pairs of agents that must be matched in all stable matchings in a given instance.

\begin{restatable}{lemma}{cmfp}\label{lem:cmfp}
Given any stable matching instance $I$ , Algorithm \ref{alg:cmfp} returns a matching that is a subset of any stable integral matching on $I$. Any stable fractional matching for instance I must set a weight of 1 on each pair contained in the matching returned by Algorithm 1.
\end{restatable}

\noindent We show that whenever Algorithm \ref{alg:cmfp} returns a perfect matching, truthful revelation of utilities is a Nash Equilibrium.

\begin{restatable}{theorem}{iccmfp}\label{thm:iccmfp}
Given any matching instance belonging to Conditional Mutual First Preference ($CMFP$), any mechanism that finds a stable fractional matching is  incentive compatible.
\end{restatable}

\noindent The proof of this theorem leads to a corollary that contains a stronger result.

\begin{restatable}{corollary}{groupic}\label{cor:groupic}
For each stable matching instance in $CMFP$, no coalition of agents can collude to strategically misreport their preferences (and improve their utilities under any mechanism) to find a stable fractional matching.
\end{restatable}

\noindent We show that whenever an instance is not in $CMFP$, given any stable integral matching (for instance, the one returned by the Gale-Shapley algorithm), we can find a non-integral stable fractional matching by Algorithm \ref{alg:frac}. Consequently, we have the following result.

\begin{restatable}{theorem}{unique}\label{thm:unq}
A stable matching instance $I$ has a unique stable fractional matching if and only if it is in CMFP.
\end{restatable}

\noindent Algorithm \ref{alg:frac} defines an entire linear polytope of stable fractional matchings leading to the following corollary.

\begin{restatable}{corollary}{countstab}\label{cor:count}
Under strict preferences, a stable matching instance has either a unique stable matching or uncountably many.
\end{restatable}

\noindent We first look at the coexistence of stability and incentive compatibility in a general setting, that is when the only condition on the utilities is that they are non-negative.

\section{Results on Stability and Incentive Compatibility}
\label{sec:ic}
\noindent Incentive compatibility is very desirable for real-world matching applications. \cite{roth1982economics} showed that there is no incentive compatible mechanism to find stable integral matchings. However, stable integral matchings form a proper subset of stable fractional matchings and it is not clear whether or not incentive compatible mechanisms exist in the fractional matching setting.  
We resolve this question by demonstrating that this, in fact, is not possible, even when we relax the notion of incentive compatibility to approximate incentive compatibility.
\subsection{Impossibility in General Settings}
We first assign cardinal utilities to the example used by \cite{roth1982economics} to show that there is no incentive compatible mechanism to find a stable fractional matching. We then modify the example to show that in fact even with a relaxation in the notion of incentive compatibility, we have the same situation.  
\begin{figure}[h!]
        \centering
        \begin{subfigure}[b]{0.32\linewidth}
            \includegraphics[width=\linewidth]{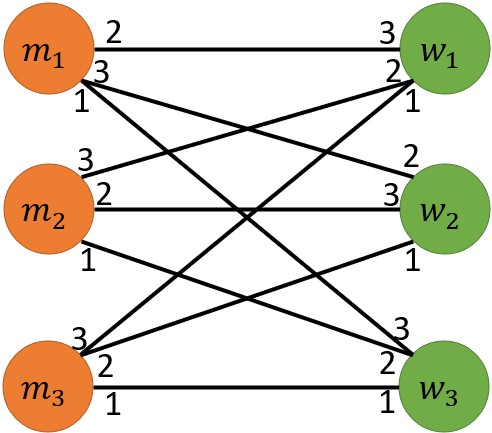}
            \caption{\footnotesize true utilities}
            \label{subfig:nodsica}
        \end{subfigure}
        \begin{subfigure}[b]{0.32\linewidth}
            \includegraphics[width=\linewidth]{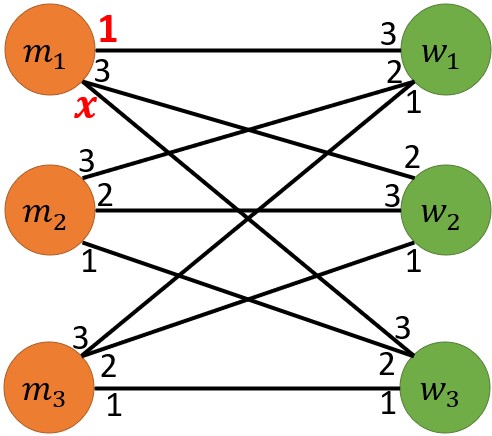}
            \caption{\noindent {\footnotesize $m_1$ misreports}}
            \label{subfig:nodsicb}
        \end{subfigure}
        \begin{subfigure}[b]{0.32\linewidth}
            \includegraphics[width=\linewidth]{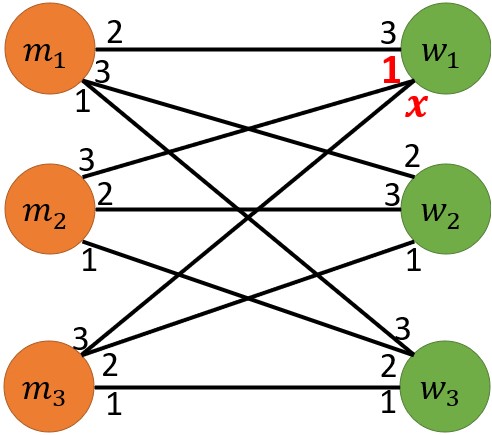}
            \caption{\noindent {\footnotesize $w_1$ misreports}}
            \label{subfig:nodsicc}
        \end{subfigure}
       \caption{A counterexample for incentive compatible mechanism}
       \label{fig:nodsic}
    \end{figure}

\exicimp*
\begin{proof}
Consider the stable matching instance described in Figure \ref{subfig:nodsica}. There are exactly two stable integral matchings $\mu_1=\{(m_1,w_2)$, $(m_2,w_1)$, $(m_3,w_3)\}$ and $\mu_2= \{(m_1,w_1)$, $(m_2,w_2)$, $(m_3,w_3)\}$. Further, the fractional matching $\mu_{\alpha}=\alpha\mu_1 +(1-\alpha)\mu_2$ is stable for all $\alpha \in [0,1]$. No other fractional matching is stable for this instance. Thus, any stable fractional matching algorithm, when run on this instance, essentially chooses a utility of $\alpha\in [0,1]$ for $\mu_\alpha$.

It is easy to see that $m_1$ will need to be matched to $w_2$ integrally, i.e. for the given stable fractional matching algorithm to choose utility $\alpha=1$. Let the algorithm choose $\alpha=c<1$. $m_1$ can now misreport his preferences as shown in Figure \ref{subfig:nodsicb} with $x\in [1,3)$. 
Under this instance, the only stable fractional matchings are $\mu_\alpha$ for $\alpha \in \left[\frac{x-1}{2},1\right]$. Thus, if a given algorithm chooses some $\alpha=c<1$, there is an incentive for $m_1$ to misreport his preferences as shown in Figure 2b with $x>2c+1$, giving him strictly higher utility. Similarly, $w_1$ would like to be matched integrally to $m_1$. Thus,  she can misreport her preferences as in Figure \ref{subfig:nodsicc} to ensure that she receives higher utility whenever the algorithm chooses a utility of $\alpha>0$.
    
Thus no mechanism resulting in a stable fractional matching can be incentive compatible for all the agents when there are no restrictions on the input instances.
\end{proof}

We now look at relaxations of stability and incentive compatibility, which are defined as follows.

\begin{definition}[$\epsilon$-Stability] A matching $\mu$ is said to be $\epsilon$-stable, $\epsilon \in [0,1)$, if for every $(m,w)\in M\times W$, we have that either $u_m(\mu) \geq (1-\epsilon)U(m,w)$ or $u_w(\mu) \geq (1-\epsilon)V(m,w)$. 
\end{definition}

\begin{definition}[$\epsilon$-IC]
A matching mechanism that returns a matching $\mu(U_1,\cdots, U_n,\, V_1,\cdots, V_n)$ is said to be $\epsilon$-IC, $\epsilon \in [0,1)$, if for each $w_j \in W$ and $m_i \in M$, 

\begin{align*}
    \text{~} v_j(\mu(V_j, ,V_{-j},\, U))&\geq  (1-\epsilon)v_j(\mu(x,V_{-j},\,U))\\
    \text{and}\quad u_i(\mu(U_i,U_{-i},\, V))\,\,\,&\geq (1-\epsilon)u_i(\mu(y,U_{-i},\, V))
\end{align*}

\noindent for any $n$ length vectors of non-negative utilities $x$ and $y$. 
\end{definition}

\noindent  We shall first examine the existence of an $\epsilon$-incentive compatible mechanism for finding a stable fractional matching.

\appxicimp*

\begin{proof}
Consider the matching instance shown in Figure \ref{subfig:noepsica}. This is the same as the instance in Figure \ref{subfig:nodsica}, with every utility of 3 replaced by $k\geq 3$. Analogous to the proof of Lemma \ref{lem:exicimp}, we have that whenever the utility of $m_1$ or $w_1$ is less than $(1-\epsilon)k$, any mechanism that finds a stable matching will not be $\epsilon -$IC and will give the agents an incentive to misreport their utilities as shown in Figures \ref{subfig:noepsicb} and \ref{subfig:noepsicc} respectively.  

Recall from the proof of Lemma \ref{lem:exicimp}, the only stable integral matchings are, $\mu_1=\{(m_1,w_2),(m_2,w_1),(m_3,w_3)\}$ and $\mu_2=\{(m_1,w_1),(m_2,w_2),(m_3,w_3)\}$. Further, any stable fractional matching mechanism must return a matching $\mu_{\alpha}=\alpha \mu_1 + (1-\alpha)\mu_2$. Note that $u_{m_1}(\mu_{\alpha})=2+(k-2)\alpha$ and $u_{w_1} (\mu_{\alpha}) = k-(k-2)\alpha$.

For any mechanism to be $\epsilon$-IC, it must select a matching $\mu_{\alpha}$ such that for $w_1$: $k-(k-2)\alpha \geq (1-\epsilon)k$. This implies that we need $$\frac{k}{k-2}\epsilon \geq \alpha.$$ Similarly for $m_1$, $2+(k-2)\alpha \geq (1-\epsilon)k.$ Consequently, to satisfy $\epsilon$-IC for $m_1$, we need that  \[\frac{k}{k-2}\epsilon \geq (1-\alpha).\] Therefore we require 
\[\max (\alpha,1-\alpha)\leq \frac{k}{k-2}\epsilon.\]
Now the LHS of this condition is minimized for $\alpha=\sfrac{1}{2}$. It is easy to see that for any $\epsilon \in [0,\sfrac{1}{2})$, there exists a large enough real valued $k>2$ such that $\frac{k}{k-2}\epsilon<\sfrac{1}{2}$. Thus for any $\epsilon \in [0,\sfrac{1}{2})$, there cannot exist an $\epsilon$-IC mechanism that always returns a stable fractional matching when the only condition on the utilities is that they are non-negative.
This proves the result.
\end{proof}

\begin{figure}[t]
        \centering
        \begin{subfigure}[b]{0.32\linewidth}
            \includegraphics[width=\linewidth]{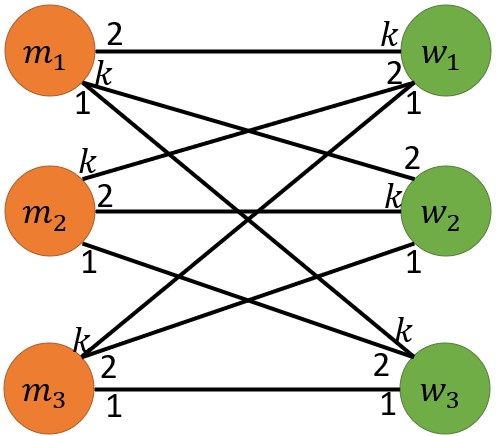}
            \caption{true utilities}
            \label{subfig:noepsica}
        \end{subfigure}
        \begin{subfigure}[b]{0.32\linewidth}
            \includegraphics[width=\linewidth]{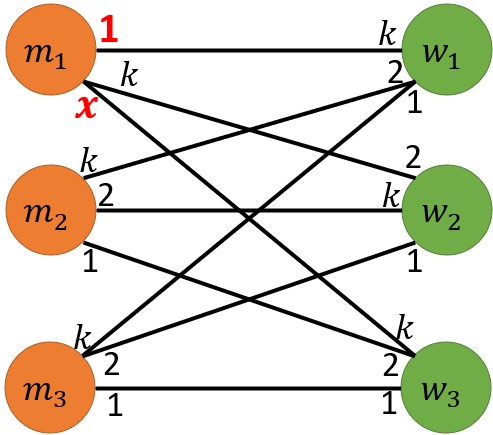}
            \caption{\noindent {\footnotesize $m_1$ misreports}}
            \label{subfig:noepsicb}
        \end{subfigure}
        \begin{subfigure}[b]{0.32\linewidth}
            \includegraphics[width=\linewidth]{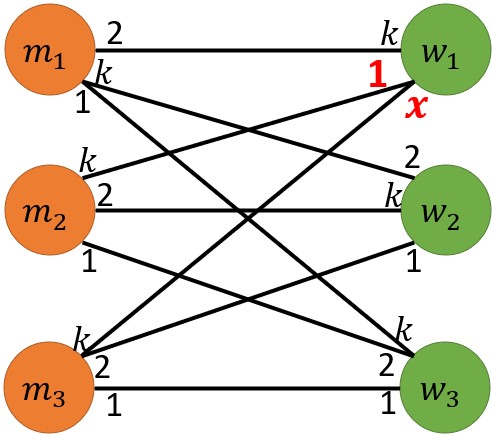}
            \caption{\noindent {\footnotesize $w_1$ misreports}}
            \label{subfig:noepsicc}
        \end{subfigure}
       \caption{Counterexample for $\epsilon -$IC mechanisms, $\epsilon \in [0, \sfrac{1}{2})$}
       \label{fig:noepsic}
    \end{figure}

Before moving to achieving exact incentive compatibility, we shall first describe a $\sfrac{1}{2}$-IC mechanism which produces a $\sfrac{1}{2}$-stable fractional matching. The mechanism essentially runs the Gale-Shapley algorithm with the men as the proposing side once, getting $\mu_M$ and the women as the proposing side once, getting $\mu_W$. It then returns a matching which assigns a weight equal to half on each of the matchings obtained $\sfrac{1}{2}(\mu_M + \mu_W)$. From \cite{roth1982economics}, the Gale-Shapley algorithm is IC for the proposing side and as a result, the described mechanism is $\sfrac{1}{2}-$IC. The matching is also $\sfrac{1}{2}$-stable by a similar argument. In fact, for this mechanism, these bounds are tight. It would be interesting to see if any other mechanism can improve upon these bounds. 

 \subsection{Incentive Compatibility under Restricted Settings}
\label{sec:cmfp}
We know from \cite{roth1982economics} that there is no incentive compatible mechanism  for finding stable integral matchings in general. However, when there is a unique stable integral matching, we have an exception. For a mechanism implementing the Gale-Shapley algorithm, truthful revelation of preferences by all agents forms a Nash equilibrium as a simple consequence of prior work. \cite{roth1982economics} showed that when the Gale-Shapley algorithm is run with men as proposers, it is a dominant strategy for men to be honest. As shown by \cite{teo2001gale}, women can find their optimal manipulation for a given instance in polynomial time.  This optimal manipulation matches her with her best possible partner under the Gale-Shapley algorithm, when men propose. It is shown by \cite{vaish2017manipulating} that the resultant matching is also stable under the true preferences. Thus, each woman's partner, even after the optimal manipulation, will remain the same. 

This motivates us to explore a class of instances where there is a unique stable fractional matching and look for an incentive compatible mechanism for this class. While there has been much work on finding stable integral matchings, so far there is no work which aims at finding a stable matching which is non-integral. For some instances, such a matching need not exist. Previous work has established that there is always a stable integral matching and therefore one degenerate fractional matching always exists. It will be nice to have an algorithm, which, when given a stable matching instance, finds a stable fractional matching that is non-integral whenever one exists. To this end, we first try to categorize the class of instances where there exists a unique stable fractional matching. Note that there is a \textit{unique} stable fractional matching in the following settings. 

\begin{enumerate}
    
\item An idealistic setting where if $m$'s first preference is $w$, then $w$'s first preference is $m$. 
Recall that we call such  pairs of nodes as  MFP pairs (mutual first preference). Here it is easy to see that the unique stable matching is the one where everyone is matched to their first preference. 

\item All men having identical preferences and all women have identical preferences. \cite{narang2020achieving} call such a setting as one with ranked utilities, albeit in the context of many-to-one matchings. Here, the only stable matching is where the $i^{th}$ ranked man is matched with the $i^{th}$ ranked woman for all $i\in [n]$. This is because the highest ranked man and woman must be matched as they are each other's first preference. Given this, now the second highest ranked man and woman become each other's first preference as the most popular man and most popular women are now unavailable. In fact any combination of these two settings will have a unique stable fractional matching.
\end{enumerate}
    
Note that if any stable matching instance has MFP pairs, any stable matching must match them. Based on this, we provide the following polynomial-time algorithm which returns a matching that must be a subset of any stable matching. 
    
    \begin{algorithm}[!ht]
    \KwIn{$I=\langle M,W,U,V\rangle$}
    \KwOut{$\mu,I'=\langle M',W',U',V'\rangle$}
     $t\gets 0$, $\mu\gets \emptyset$\;
     $M^{(0)}\gets M,\,W^{(0)}\gets W,\,U^{(0)}\gets U,\,V^{(0)}\gets V$\;
     $I^{(0)}=\langle M^{(0)},W^{(0)},U^{(0)},V^{(0)}\rangle$\;
     \While{$\exists (m,w)$ s.t. $ w = \argmax_{a\in W^{(t)}} U^{(t)}(m,a)$ AND  $m = \argmax_{a\in M^{(t)}} V^{(t)}(a,w) $ }{
      $\mu \gets \mu \cup (m,w)$ \;
      $I^{(t+1)}$ is $I^{(t)}$ with $m$ and $w$ removed\;
      $t\gets t+1$\;
     }
     $I'\gets I^{(t)}$\; 
     \caption{Mutual first preference matching algorithm}\label{alg:cmfp}
    \end{algorithm}
    
We use the above algorithm to define a special class of stable matching instances called \textbf{Conditional Mutual First Preference (CMFP)}. 

\begin{definition}
We say that a stable matching instance $I$ is in class CMFP if Algorithm 1 returns a perfect matching when $I$ is given as input. 
\end{definition}
    
\cmfp*
    
\begin{proof}
We prove this result by contradiction. Assume the result is not true. That is, there is a stable matching instance $I$ where $\mu$ is the matching returned by Algorithm 1 on $I$, and there is a stable matching $\mu'$ on $I$ such that $\mu$ is not a subset of $\mu'$. Let$|\mu|=k>0$ (If $|\mu|=0$ then $\mu$ is the empty matching, and as a result is a subset of every matching on $I$.).  Let $i_1,\cdots,i_k$ and  $j_1,\cdots,j_k$ such that $\mu^*=$ $\{(m_{i_1}$, $w_{j_1})$, $\cdots,$ $(m_{i_k}$, $w_{j_k})\}$ where $(m_{i_t},w_{j_t})$ are matched in the $t^{th}$ round in Algorithm 1.

Let $t^*\in [k]$ be the lowest index in $[k]$ such that $\mu'(m_{i_{t^*}},w_{j_{t^*}})<1$. As $\mu'$ is stable, at least one of $m_{i_{t^*}}$ and $w_{j_{t^*}}$  must have higher utility for $\mu'$ than from matching integrally with each other. Without loss of generality, let this be $m_{i_{t^*}}$. By construction of Algorithm 1,  this is only possible when $\mu'(m_{i_{t^*}},w_{j_{t'}})>0$ for some $t'<t^*$ . Thus, $\mu'(m_{i_{t'}}$, $w_{j_{t'}})<1$. This is a contradiction as we assumed $t^*$ to be the lowest index for which this happens. This proves the result.
\end{proof}
As a result, we have a polynomial-time algorithm to determine when an instance is CMFP. Algorithm 1 helps us identify certain edges which must be included in any stable matching. Further, if the matching returned is a perfect matching, then it is the unique stable fractional matching for the given instance. This is a consequence of Lemma \ref{lem:cmfp}. The class CMFP is also special in that incentive compatibility can be achieved for this class. 

\iccmfp*
    
\begin{proof}
Given a stable matching instance $I\in CMFP$, every mechanism generating a stable fractional matching will return the same matching $\mu^*$. We use Algorithm 1 to give us labellings $i_1,\cdots,\, i_n$ and $j_1,\cdots,j_n$ such that $\mu^*=$ $\{(m_{i_1}$, $w_{j_1})$, $\cdots,$ $(m_{i_n}$, $w_{j_n})\}$ where $(m_{i_t},w_{j_t})$ are matched in the $t^{th}$ round in Algorithm 1.
    
Let all other agents be truthful. Clearly, $(m_{i_1},w_{j_1})$ have no incentive to misreport their preferences as they are already matched to their first preference. As long as $m_{i_1}$ and $w_{j_1}$ stay truthful, they will continue to be matched to each other, irrespective of how other agents are behaving. Now for $t>1$ for $(m_{i_t},w_{j_t})$ neither can gain by increasing or decreasing their utility for any agent matched earlier. This is because the agents who are matched before round $t$ are truthful and will not become MFP pairs with $m_{i_t}$ or $w_{j_t}$. Thus, those pairings will not change. Of the remaining agents, $m_{i_t}$ and $w_{j_t}$ have highest utility for each other and cannot benefit from misreporting their preferences. Consequently, when all other agents are truthful, no agent has an incentive to misreport its preferences.
\end{proof}

In fact, the same reasoning also implies that the stable matching in a CMFP instance is also robust to coalitional manipulation. Note that the nodes matched to their first preferences have no incentive to deviate. The remaining are unable to misreport so that they would be matched to a node which has already been matched earlier by Algorithm 1. These nodes have no incentive to misreport so as to match with nodes matched later. As a result, there is no way for a coalition of agents to misreport preferences and increase their utilities. 

\groupic*
Consequently, when the matching instances are restricted to those in $CMFP$, we have incentive compatible mechanisms that produce stable fractional matchings.
What we now show is that in fact, this is the only set of matching instances where there is a unique stable fractional matching.

\section{Results on Instances with Unique Stable Matchings }
\label{sec:unique}
We now design a polynomial-time algorithm to find a stable fractional matching which is non-integral, whenever $I \notin CMFP$, starting from a stable integral matching (By \cite{gale1962college}, such a matching always exists and can be found efficiently.). To the best of our knowledge, there is no algorithm in prior work to find a stable fractional matching that is non-integral. When the preferences are strict, we utilize envy-graphs to help find stable fractional matchings given a stable integral matching. 
Note that by Lemma \ref{lem:cmfp}, the matching returned by Algorithm 1 must be a subset of any stable matching. As a result, it suffices to have an algorithm that works on instances without MFP pairs. We can first run Algorithm 1 and then use the aforementioned algorithm on the reduced instance returned by Algorithm 1.
    
The key idea is as follows. Under strict preferences, when there are MFP pairs, it is not possible to reduce the weight on the edge between them and still be stable. However, whenever there are no MFP pairs, each edge matched under a stable matching will have at least one node who prefers another agent to their current partner. To this end, we use envy-graphs to construct other matchings which improve the utility of some of the agents, without creating blocking pairs. As the preferences are strict, there will always exist a small enough weight to place on these matchings such that there are no blocking pairs with other nodes. While it is not necessary to explicitly construct envy-graphs to help find such matchings, envy-graphs help in expressing these ideas in an intuitive manner. While this paper presents the first exploration of the intersection of stability and incentive compatibility in fractional matchings, there is still much to be studied in this area. It would be interesting  to see if relaxing stability can help achieve incentive compatibility or whether there are other classes of matching instances for which incentive compatibility can be achieved.

\subsection{Envy-Graphs}
In this section, we introduce cycles in envy-graphs and discuss acyclic envy-graphs.

\subsubsection*{Cycles in Envy-Graphs}
Let $\mu_1$ be a stable integral matching and  assume $G_W(\mu_1)$ contains a cycle.  We can produce an alternate matching $\mu_2$ where all women not in $C$ are matched as in $\mu_1$ and each woman in $C$ is matched to her successor's partner under $\mu_1$. The successor of a node on a directed path or a cycle is the vertex to which it has an outgoing edge in the path/cycle. There can only be one such vertex.
This resultant matching need not be stable, even if the original matching was stable. However, it will increase the utility of the women in the cycle. Define $\mu_\alpha=\alpha \mu_1+(1-\alpha)\mu_2,\, \alpha \in [x,1]$. 
    
Let us now discuss the stability of $\mu_\alpha$. In this case, no woman sees a decrease in utility. As a result, it suffices to ensure that no man experiences a large enough drop in utility for a blocking pair to form. Let us first consider men whose partners are unchanged. They do not form any blocking pairs. For these men, any woman they may prefer to their current partner has the same or higher utility than in the original matching. Now, consider the men whose partners do change. Note that as $\mu_1$ is stable, these men prefer their partners under $\mu_1$ over those under $\mu_2$, otherwise $\mu_1$ would not  be stable. Let us define $W_m(\mu)=\{w\in W: V(m,w)>v_w(\mu)\}$. 
In order for $\mu_\alpha,\, \alpha \in [x,1]$  to be stable we need that for each $m\in M$ such that $\mu_1(m)\neq \mu_2(m)$, $xu_m(\mu_1)+(1-x)u_m(\mu_2)\geq \max_{w\in W_m(\mu_1)} U(m,w)$. As we have strict preferences, a value of $x<1$ can be found in polynomial-time. Analogously, a fractional matching can also be found when there is a cycle in $G_M(\mu)$. 

\subsubsection*{Acyclic Envy-Graphs}

In the absence of a cycle and any MFP pairs, a path must necessarily exist in at least one of $G_M(\mu_1)$ or $G_W(\mu_1)$. Let this be path $P$ in $G_W(\mu_1)$. Now we can create an alternate matching $\mu_2$ as before where each woman on the path is matched to the partner of her successor, with the sink being matched to the partner of the source. This alone will not be enough to make a stable matching. This is because the sink of $P$, say $w$, and her partner, say $m$, will both get utility less than that in $\mu_1$, and form a blocking pair. In order to mitigate this, $m$ needs an increase in utility ($w$, being a sink, is matched to her favourite man, so she cannot see an increase in utility by matching with anyone else). To this end, we find a path in $G_M(\mu_1)$ starting from $m$. This must exist, as there are no MFP pairs and $w$ is a sink. As before, this path is used to generate an alternate matching. In this manner, we can continue till there are no blocking pairs possible.

These matchings form the support of the stable fractional matching. To achieve stability, 
we need to set appropriate weights on the matchings. This is possible as the preferences are strict. As in the case of when cycles are present, a linear program can be solved to find the weights on $\mu_1$ and all newly defined matchings to find a stable fractional matching.

\begin{algorithm}[!ht]
\label{alg:frac}\small{
\KwIn{Instance $I=\langle M,W,U,V\rangle$, stable matching $\mu_1$}
\KwOut{$\mu$}
$(\mu_s,I')\gets $ CMFP$\_$matching($I$)\;
$\mu\gets \mu_s$\;
\If{$I'$ is non-empty}{
    $k\gets 1$\;
    Generate Envy-Graphs $G_M(\mu_1)$ and $G_W(\mu_1)$\;
    \eIf{Either $G_M(\mu_1)$ or $G_W(\mu_1)$ contain a cycle $C$}{
        $k\gets 2$\;
        $\mu_2\gets UpdateCycle(\mu_1,C)$\;
        $A\gets C$
    }
    {
        $A\gets \emptyset$ {\small $\backslash\backslash$ Set of agents whose utilities increase}\;
        Find sink node $w_1\in G_W(\mu_1)$\;
        $currNode\gets \mu_1(w_1$)\;
        $currSet \gets M$\;
        \While{$currNode \notin A$ }
            {
            $k\gets k+1$\;
            Find path $P$ in $G_{currSet}(\mu_1)$ starting from $currNode$\;
            $\mu_k \gets UpdatePath(\mu_1,P)$\;
            $A\gets A\cup (P \setminus \{sink(P)\})$\;
            $currNode\gets \mu_1(sink(P))$\;
            \eIf{$currSet=M$}{$currSet\gets W$\;}{$currSet\gets M$\;}
            }
    }
    Find $x_1,\cdots x_k \in [0,1)$ such that $\sum_{i=1}^k x_i=1$ and\\
    $\forall m \in M \cap A$, $\sum_{i=1}^kx_iU(m,\mu_i(m))\geq u_m(\mu_1)$,\\ 
    $\forall w \in W \cap A$, $\sum_{i=1}^kx_iV(\mu_i(w),w)\geq v_w(\mu_1)$,\\
    $\forall m \in M \setminus A$, $\sum_{i=1}^kx_iU(m,\mu_i(m))\geq \max_{w\in W: U(m,w)<u_m(\mu_1)} U(m,w)$ and \\ 
    $\forall w \in W \setminus A$, $\sum_{i=1}^kx_iV(\mu_i(w),w)\geq \max_{m\in M: V(m,w)<v_w(\mu_1)} V(m,w)$\\
    $\mu\gets \mu_s \cup \sum_{i=1}^kx_i\mu_i $\;
}
}
\caption{Algorithm for generating stable non-integral matchings}\label{alg:frac}
\end{algorithm}

\subsection{Algorithm for Generating Stable Non-Integral Matchings}
Before detailing the algorithm, we first set some notation. $G_M(\mu)$ and $G_W(\mu)$ are the envy-graphs of men and women under $\mu$ as defined. For a path $P$, $sink(P)$ denotes the last node in the directed path. The $UpdateCycle$ routine takes as input a matching $\mu$ and a cycle in either $G_M(\mu)$ or $G_W(\mu)$ and matches each agent in the cycle to their successor's (the agent to whom they have an outgoing edge in the cycle) partner. All agents whose matching is not defined by this are matched as in $\mu$. UpdatePath works almost identically, with the sink of the path being matched to the source's partner. 

\subsubsection*{Correctness of Algorithm 2}
The correctness of the Algorithm hinges on being able to correctly find the required values for $x_1,\cdots, x_k$. Observe the coefficients of the linear program described. By Farkas' Lemma, this must be feasible, whenever preferences are strict. As a result, we have that Algorithm 2 correctly finds a stable non-integral matching, whenever the preferences are strict and the instance is not in CMFP.

\subsubsection*{Time Complexity of Algorithm 2}
A cycle or a path in an envy-graph can always be found in polynomial-time. Thus, computing a new matching can be done in polynomial-time. As each matching is defined to improve the utility of a previously unimproved agent, there can be at most $2n$ matchings. Further, the required $x_i$ utilities can be found by solving a linear program with the appropriate constraints. The number of variables is $k$, which is the number of matchings.  Thus, this linear program can be solved efficiently. Consequently, the algorithm detailed is a polynomial-time algorithm. It finds a stable fractional matching which is not integral whenever one exists.

\unique*

\begin{proof}
Let $I$ be a stable matching instance. We have already established that if $I\in CMFP$, $I$ has a unique stable fractional matching. We now prove the other direction by establishing the contrapositive. Let $I\notin CMFP$, and $\mu_s$ and $I'$ be the matching and instance returned by Algorithm 1 on $I$. Clearly, $\mu_s$ is not a perfect matching. Any matching $\mu$ on $I'$ which is stable, can be combined with $\mu_s$ to obtain a stable matching for $I$. 
    
Let $\mu_1$ be the stable matching returned by the Gale-Shapley algorithm on $I'$ with men proposing. Note that, as $I'$ is the instance returned by Algorithm 1, there is are no MFP pairs. Consequently, at least one of $G_M(\mu_1)$ and $G_W(\mu_1)$ will be  non-empty. Algorithm 2 will give a stable fractional matching which is not integral.   
\end{proof}
A simple corollary of Theorem \ref{thm:unq} is that under strict preferences, a stable matching instance has either a unique stable fractional matching, or uncountably many.
\countstab*

\subsection{The Case of Weak Preferences}
\noindent In the above algorithm,  we have assumed that the preferences of the agents on either side are strict. That is, the utilities are such that no two agents on the opposite side are equal. Hence, the utility of each agent implies a linear order over the agents on the other side.  Note that, this assumption is rather common and not particularly restrictive. In fact, in the absence of this assumption, our results on CMFP continue to hold. However, Algorithm \ref{alg:frac} requires strict preferences to be execute correctly.  To see why, consider the matching instance described in Figure \ref{fig:weakcex}.

\begin{figure}
    \centering
    \includegraphics[width=0.3\linewidth]{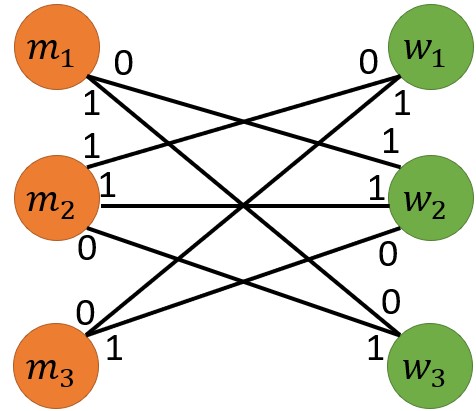}
    \caption{Algorithm \ref{alg:frac} fails when there are ties}
    \label{fig:weakcex}
\end{figure}

In this example, one stable matching is $\mu_1=\{(m_1,w_3), (m_2,w_1), (m_3,w_2)\}$. This is a men-optimal stable matching where each man gets utility 1 and each woman gets utility 0. There is an envy cycle where $w_1$ envies $w_2$, $w_2$ envies $w_3$ and $w_3$ envies $w_1$ in turn. In this case, Algorithm \ref{alg:frac} will try to find a convex combination of $\mu_2=\{(m_1,w_2),(m_2,w_3),(m_3,w_1)\}$ and $\mu_1$. However, any convex combination that is not $\mu_1$ or $\mu_2$ will not be stable as it will give both $m_2$ and $w_2$ utility less than 1, making $(m_2,w_2)$ a blocking pair.

\section{Conclusions and Future Work}
\label{sec:conc}
This paper has looked into the design of incentive compatible mechanisms for finding stable fractional matchings. We first showed that this is not possible under unrestricted cardinal utilities, even when we relax the the notion of incentive compatibility up to a half-approximation. 
We then discovered a class of stable matching instances which have a unique stable fractional matching, namely those satisfying the conditional mutual first preference (CMFP) property. We showed that {\em every} mechanism that finds a stable fractional matching is incentive compatible if and only if the input instances are in CMFP. We presented  an algorithm (Algorithm 2) that computes, under strict preferences, stable fractional matchings which are non-integral (to the best of our knowledge, this is the first such algorithm). This algorithm makes intelligent use of envy-graphs, hitherto unused in the stable matchings literature.  

We assert that we have settled a few important problems in the topic of co-existence of stability and incentive compatibility in fractional matchings. Clearly this does not in anyway settle all the issues in this important topic. Our work suggests several interesting directions of future work. Firstly, while our algorithm to find a stable fractional matching does not imply a matching mechanism which is incentive compatible, the hardness of finding a beneficial manipulation for our algorithm is not clear. 
Another relevant direction of future work would be to find algorithms that produce  stable fractional matchings and are hard to manipulate. 
One more possibility would be to investigate if it is possible to achieve incentive compatibility by relaxing the stability constraint. 

Envy-graphs are an essential tool for conducting the analysis  in this paper. 
An interesting problem is whether the use of envy-graphs can yield better approximation algorithms for finding social welfare maximizing stable fractional matchings. 
Our analysis hinges  on the fact that there will always be envy whenever the instance does not have any MFP pairs. As a result, an interesting future direction would be to look at envy-free matchings and try and see if there are any incentive compatible mechanisms to find envy-free matchings.

\section{Acknowledgements}
Shivika Narang is supported by a Tata Consulatancy Services Research Fellowship.

\bibliographystyle{abbrv}
\bibliography{references}
\end{document}